  \newcommand\twocol[2]{%
  \begin{center}%
  \begin{minipage}[t]{0.55\textwidth}%
  \vspace{0pt}%
  {#1}%
  \end{minipage}\hfill%
  \begin{minipage}[t]{0.45\textwidth}%
  \vspace{0pt}%
  {#2}%
  \end{minipage}%
  \end{center}}
  \newtheorem{observation}{Observation}
  \newtheorem{myclaim}{Claim}
  \def\dotdot{\mathinner{\ldotp\ldotp}}
  \newcommand{\SufCodes}{\mathit{SufCodes}}
  \newcommand{\shape}{\mathit{shape}}
  \newcommand{\code}{\mathit{code}}
  \newcommand{\suf}{\mathit{suf}}
  \newcommand{\prevlt}{\mathit{prev}_<}
  \newcommand{\preveq}{\mathit{prev}_=}
  \newcommand{\opSufTree}{\mathit{opSufTree}}
  \newcommand{\Occ}{\mathit{Occ}}
  \title{Order-Preserving Suffix Trees and\\ Their Algorithmic Applications}
  \author{
    Maxime Crochemore\inst{4,6}
    \and
    Costas S.\ Iliopoulos\inst{4,5}
    \and
    Tomasz Kociumaka\inst{1}
    \and
    Marcin Kubica\inst{1}
    \and
    Alessio Langiu\inst{4}
    \and
    Solon P. Pissis\inst{6,7}\thanks{
    Supported by the NSF--funded iPlant Collaborative (NSF grant \#DBI-0735191).
    }
    \and\\
    Jakub Radoszewski\inst{1}
    \and
    Wojciech Rytter
    \inst{1}\fnmsep\inst{3}\thanks{
    Supported by grant no.\ N206 566740 of the National Science Centre.
    }
    \and
    Tomasz Wale\'n\inst{2,1}
  }
  \institute{
    Faculty~of Mathematics, Informatics and Mechanics,\\
    University of Warsaw, Warsaw, Poland\\
    \email{[kociumaka,jrad,rytter,walen]@mimuw.edu.pl}
    \and
    Laboratory of Bioinformatics and Protein Engineering,\\
    International Institute of Molecular and Cell Biology in Warsaw, Poland
    \and
    Faculty of Mathematics and Computer Science,\\
    Copernicus University, Toru\'n, Poland
    \and
    Dept.~of Informatics, King's College London, London WC2R 2LS, UK \\
    \email{[maxime.crochemore,csi]@dcs.kcl.ac.uk}
    \and
    Faculty of Engineering, Computing and Mathematics,\\
    University of Western Australia, Perth WA 6009, Australia
    \and
    Universit\'e Paris-Est, France
    \and
    Laboratory of Molecular Systematics and Evolutionary Genetics,\\
    Florida Museum of Natural History, University of Florida, USA
    \and
    Scientific Computing Group (Exelixis Lab \& HPC Infrastructure),\\
    Heidelberg Institute for Theoretical Studies (HITS gGmbH), Germany\\
    \email{solon.pissis@h-its.org}
  }
\begin{document}
  \maketitle

  \begin{abstract}
    Recently Kubica et al.\ (\emph{Inf.\ Process.\ Let.}, 2013) and
    Kim et al.\ (\emph{submitted to Theor.\ Comp.\ Sci.}) introduced
    order-preserving pattern matching.
    In this problem we are looking for consecutive substrings of the text
    that have the same ``shape'' as a given pattern.
    These results include a linear-time order-preserving pattern
    matching algorithm for polynomially-bounded alphabet and
    an extension of this result to pattern matching with multiple patterns.
    We make one step forward in the analysis and give an
    $O(\frac{n\log{n}}{\log\log{n}})$ time randomized algorithm
    constructing suffix trees in the order-preserving setting.
    We show a number of applications of order-preserving suffix trees
    to identify patterns and repetitions in time series.
  \end{abstract}

  \section{Introduction}
  We introduce order-preserving suffix trees that can be used for pattern matching
  and repetition discovery problems in the order-preserving setting,
  in particular, to model finding trends in time series
  which appear naturally when considering e.g. the stock market or
  melody matching of two musical scores.

  Two strings $x,\ y$ of the same length over an integer alphabet
  are called \emph{order-isomorphic} (or simply isomorphic), written $x\approx y$, if
  $$\forall_{1\le i,j \le |x|}\ x[i] \le x[j]\Leftrightarrow y[i] \le y[j].$$
  \begin{example}
    $(5, 2, 7, 5, 1, 4, 9, 4, 5) \approx (6, 4, 7, 6, 3, 5, 8, 5, 6)$, see Fig.~\ref{fig:matching}.
  \end{example}
  The notion of order-isomorphism was introduced in \cite{Costas_TCS} and \cite{Kulczynski_IPL}.
  Both papers independently study the problem of identifying all consecutive substrings
  of a string $x$ that are order-isomorphic to a given string $y$, the so-called
  order-preserving pattern matching problem.
  If $|x|=n$ and $|y|=m$, an $O(n+m\log{m})$ time algorithm for this problem
  is presented in both papers.
  Morover, \cite{Costas_TCS} presents extensions of this problem to multiple-pattern matching
  based on the algorithm of Aho and Corasick.

  The problem of order-preserving pattern matching has evolved from the combinatorial
  study of patterns in permutations.
  This field of study is concentrated on pattern avoidance, that is, counting the number of
  permutations not containing a subsequence which is order-isomorphic to a given pattern.
  Note that in this problem the subsequences need not to be consecutive.
  The first results on this topic were given by
  Knuth \cite{Knuth_Art1} (avoidance of 312),
  Lov{\'a}sz \cite{lovasz} (avoidance of 213)
  and Rotem \cite{Rotem} (avoidance of both 231 and 312).
  On the algorithmic side, patten matching in permutations (as a subsequence)
  was shown to be NP-complete \cite{Bose_npcomplete}
  and a number of polynomial-time algorithms for special cases of patterns
  were developed \cite{Albert,ChangWang,DBLP:conf/isaac/GuillemotV09,LIbarra}.

  \paragraph{\bf Structure of the paper.}
  In Section~\ref{sec:op_suftree} we give a formal definition of an order-preserving
  suffix tree and describe its basic properties.

  To obtain an efficient algorithm constructing such suffix trees in
  Section~\ref{sec:offline_oracle} we develop
  an offline character oracle based on orthogonal range counting.
  An $O(\frac{n\log{n}}{\log\log{n}})$ time randomized and offline algorithm
  constructing order-preserving suffix trees is obtained.
  It is based on a general framework of
  Cole and Hariharan \cite{DBLP:journals/siamcomp/ColeH03a}
  (or, alternatively, on the approach of Lee, Na and Park \cite{Lee2011201}).

  Finally in Section~\ref{sec:applications} present a number of applications of order-preserving suffix trees
  that generalize the results from \cite{Costas_TCS,Kulczynski_IPL}.
  These applications are based on classical applications of suffix trees,
  however a new combinatorial insight is required to adapt the known tools
  to the order-preserving setting.
  In particular, we consider order-preserving string matching
  and the problem of detecting the simplest
  order-preserving repetitions that we call op-squares.

  \section{Preliminaries}
  Let $w$ be a string of length $n$ over an integer alphabet $\Sigma$, $w=w_1 \ldots w_n$.
  We assume that $\Sigma$ is polynomially bounded in terms of $n$.
  By $w[i \dotdot j]$ we denote the substring $w_i \dotdot w_j$.
  Denote by $\suf_i$ the $i$-th suffix of $w$, that is, $w[i \dotdot n]$.
  For any $i \in \{1,\ldots,n\}$ define:
  \begin{align*}
    \prevlt(w,i) &= |\{k\,:\, k < i,\,w_k < w_i\}|,\\ 
    \preveq(w,i) &= |\{k\,:\, k < i,\,w_k = w_i\}|.
  \end{align*}
  We introduce codes of single positions and and codes of strings as follows:
  $$\phi(w,i)=(\prevlt(w,i),\preveq(w,i))$$
    $$\code(w)=(\phi(w,1),\phi(w,2),\ldots,\phi(w,n)).$$
 \noindent
  For a string $w$, define $\shape(w)$ as the lexicographically smallest
  string $u$ over $\{0,1,\ldots\}$ such that $u \approx w$.

  \begin{figure}[ht]
  \centering
    \includegraphics{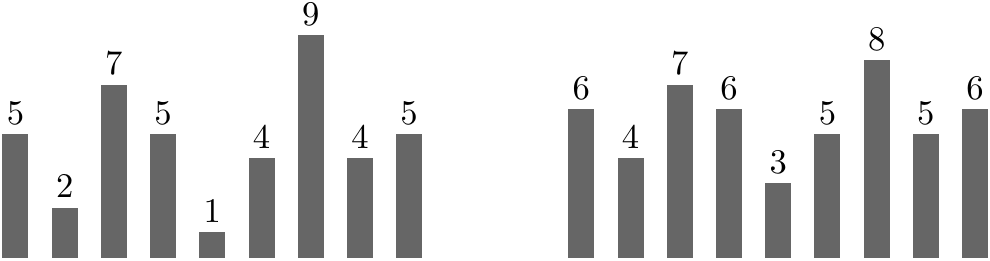}
    \caption{
      Example of two order-isomorphic strings. Their codes are equal to $(0,0) \,    \, (0,0) \,    \, (2,0) \,    \, (1,1) \,    \, (0,0) \,
           \, (2,0) \,    \, (6,0) \,    \, (2,1) \,    \, (4,2)$ and their shapes are equal to $(3, 1, 4, 3, 0, 2, 5, 3, 3)$.
    }\label{fig:matching}
  \end{figure}

  \begin{observation}\label{obs:prefix_properties}
    The code has an {\em online property}: the code of the $i$-th character does
    not depend on characters in positions to the right of $i$:
    if $\code(x)=\code(y)$ then $\code(x)$ is a prefix of $\code(yz)$.
    (Note that the function $\shape$ does not have this property.)
  \end{observation}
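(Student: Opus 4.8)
The plan is to reduce the whole statement to a single locality fact: each entry $\phi(w,i)$ of $\code(w)$ is a function of the prefix $w[1\dotdot i]$ only. Concretely, I would first observe that $\prevlt(w,i)=|\{k:k<i,\ w_k<w_i\}|$ and $\preveq(w,i)=|\{k:k<i,\ w_k=w_i\}|$ are defined purely in terms of indices $k<i$ together with the comparisons between $w_k$ and $w_i$; no index $k\ge i$ enters either definition. Hence for any two strings $w,w'$ with $w[1\dotdot \ell]=w'[1\dotdot \ell]$ we get $\phi(w,i)=\phi(w',i)$ for all $i\le \ell$, and therefore $\code(w)[1\dotdot\ell]=\code(w')[1\dotdot\ell]$. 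This is the ``online property'' in its raw form, and it is essentially the only content; the rest is bookkeeping.

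Next I would derive the stated implication. Assume $\code(x)=\code(y)$. Since the length of a code equals the length of the string it encodes, this already forces $|x|=|y|=:m$. Now take any extension $yz$. Its prefix of length $m$ is exactly $y$, so by the locality fact above $\code(yz)[1\dotdot m]=\code(y)$, which equals $\code(x)$ by assumption; thus $\code(x)$ is a prefix of $\code(yz)$, as claimed. The argument uses nothing beyond the definitions, so there is no genuine obstacle here; the only subtlety worth flagging is the implicit appeal to $|x|=|y|$, which must be extracted from $\code(x)=\code(y)$ before one can even speak of ``$\code(x)$ being a prefix of'' anything.

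Finally, for the parenthetical remark that $\shape$ lacks this property, I would not prove anything but simply exhibit a small counterexample: $\shape$ renumbers positions according to the rank of their value among \emph{all} values of the string, so appending a character smaller than those already present shifts the ranks of earlier positions. For instance $\shape((1))=\shape((2))=(0)$, yet $\shape((2,1))=(1,0)$, and $(0)$ is not a prefix of $(1,0)$. This contrast is precisely why the code, rather than the shape, is the right object for an incremental (suffix-tree) construction.
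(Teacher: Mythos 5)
Your proof is correct and matches the paper's intent: the paper states this as an unproved Observation precisely because it follows immediately from the fact that $\prevlt(w,i)$ and $\preveq(w,i)$ are defined only over indices $k<i$, which is exactly the locality fact you isolate. Your counterexample for $\shape$ (namely $\shape((2))=(0)$ but $\shape((2,1))=(1,0)$) is also a valid witness for the parenthetical remark.
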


  \noindent
  The following obvious fact is useful in the proof of the forthcoming lemma.
  \begin{observation}
    $x \approx y\ \Leftrightarrow\ \shape(x)=\shape(y)$.
  \end{observation}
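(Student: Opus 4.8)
The plan is to reduce the whole statement to the fact that $\approx$ is an equivalence relation on strings of a fixed length. First I would check, directly from the definition, that $\approx$ is reflexive, symmetric, and transitive: reflexivity and symmetry are immediate, and transitivity follows by chaining the biconditionals, since $x[i]\le x[j]\Leftrightarrow y[i]\le y[j]$ and $y[i]\le y[j]\Leftrightarrow z[i]\le z[j]$ together give $x[i]\le x[j]\Leftrightarrow z[i]\le z[j]$ for all $i,j$. I would also record explicitly that $\shape(w)$ is well-defined, since this is what makes ``lexicographically smallest'' meaningful: the set of strings $u$ over $\{0,1,\ldots\}$ with $u\approx w$ is non-empty (map each position of $w$ to the rank of its value among the distinct values occurring in $w$), and one may restrict attention to $u$ whose entries are at most $|w|-1$, so the set is finite and has a lexicographically smallest element.

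For the forward implication, I would assume $x\approx y$ and observe that, by symmetry and transitivity, a string $u$ satisfies $u\approx x$ if and only if $u\approx y$. Hence the sets $\{u : u\approx x\}$ and $\{u : u\approx y\}$ (restricted to strings over $\{0,1,\ldots\}$) coincide, and therefore so do their lexicographically smallest elements, i.e.\ $\shape(x)=\shape(y)$. For the backward implication, I would assume $\shape(x)=\shape(y)$; by definition $x\approx\shape(x)$ and $y\approx\shape(y)$, so $x\approx\shape(x)=\shape(y)\approx y$, and transitivity yields $x\approx y$.

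I do not expect a genuine obstacle here; this is the ``obvious fact'' the authors flag, and the argument is a couple of lines. The only points worth stating carefully are the implicit common-length constraint (both $x$, $y$, and all candidate $u$'s share the length $n$, which is built into the notation $x\approx y$) and the well-definedness of $\shape$, which I would dispose of first so that the rest reads cleanly.
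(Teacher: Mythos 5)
Your proof is correct. The paper states this observation without proof, treating it as an obvious fact, and your argument --- establishing that $\approx$ is an equivalence relation, that $\shape$ is well-defined as the lexicographically smallest element of a (reducible-to-finite) isomorphism class, and then reading off both implications from the fact that $\shape(x)$ depends only on the equivalence class of $x$ --- is exactly the intended justification.
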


  \begin{lemma}\label{lem:code}
    $x \approx y\ \Leftrightarrow\ \code(x)=\code(y)$.
  \end{lemma}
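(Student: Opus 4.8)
The forward implication I would obtain by directly unfolding the definitions. Assume $x\approx y$ (so in particular $|x|=|y|$). Fix a position $i$ and some $k<i$; then $x_k<x_i \iff \lnot(x_i\le x_k) \iff \lnot(y_i\le y_k) \iff y_k<y_i$, and similarly $x_k=x_i \iff (x_k\le x_i$ and $x_i\le x_k) \iff (y_k\le y_i$ and $y_i\le y_k) \iff y_k=y_i$. Hence $\{k<i:x_k<x_i\}=\{k<i:y_k<y_i\}$ and $\{k<i:x_k=x_i\}=\{k<i:y_k=y_i\}$, so $\prevlt(x,i)=\prevlt(y,i)$ and $\preveq(x,i)=\preveq(y,i)$ for every $i$, i.e.\ $\code(x)=\code(y)$.

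For the converse the plan is to show, by induction on the prefix length, that the code determines the entire ``$\le$-preorder'' of the positions. For a string $w$ of length $n$ write $R_w^j=\{(a,b)\in\{1,\dots,j\}^2 : w_a\le w_b\}$; this is a total preorder, and $x\approx y$ is exactly the statement $R_x^n=R_y^n$. So, assuming $\code(x)=\code(y)$ (hence $|x|=|y|=n$), I would prove $R_x^j=R_y^j$ for all $j\le n$. The base case $j=1$ is trivial. For the step, set $R:=R_x^j=R_y^j$ and let $C_1,\dots,C_m$ be the value classes it induces, ordered by value, with partial sizes $s_t=|C_1|+\dots+|C_t|$ (so $0=s_0<s_1<\dots<s_m=j$). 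For $w\in\{x,y\}$, the set of $a\le j$ with $w_a=w_{j+1}$ is either empty (when the value $w_{j+1}$ does not occur in $w_1\dots w_j$) or a single class $C_t$; in the latter case $|C_t|=\preveq(w,j+1)$ and $s_{t-1}=\prevlt(w,j+1)$, and strict monotonicity of the $s_t$ makes $t$ unique. In the ``new value'' case the set of $a\le j$ with $w_a<w_{j+1}$ is a prefix $C_1\cup\dots\cup C_t$ with $s_t=\prevlt(w,j+1)$, again unique. Since $\prevlt(x,j+1)=\prevlt(y,j+1)$ and $\preveq(x,j+1)=\preveq(y,j+1)$, this yields $\{a\le j:x_a<x_{j+1}\}=\{a\le j:y_a<y_{j+1}\}$ and $\{a\le j:x_a=x_{j+1}\}=\{a\le j:y_a=y_{j+1}\}$ (and hence the complementary ``$>$'' sets agree too), so every new pair involving position $j+1$ lies in $R_x^{j+1}$ iff it lies in $R_y^{j+1}$, giving $R_x^{j+1}=R_y^{j+1}$. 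Taking $j=n$ finishes the converse.

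The step that needs genuine care — the crux — is the uniqueness claim: that knowing only \emph{how many} earlier positions are strictly below $w_{j+1}$ and \emph{how many} equal it already pins down where $w_{j+1}$ sits in the current preorder. This is precisely where strict monotonicity of the partial sizes $s_t$ enters, and it is the order-preserving counterpart of the online property from Observation~\ref{obs:prefix_properties}: each $\phi(w,j+1)$ depends only on the processed prefix, which is what lets the induction advance one position at a time. Everything else is bookkeeping. One could alternatively phrase the converse as ``$\code(x)=\code(y)\Rightarrow\shape(x)=\shape(y)$'' and invoke the observation $x\approx y\Leftrightarrow\shape(x)=\shape(y)$, but reconstructing $R_x^n=R_y^n$ directly seems the cleaner route.
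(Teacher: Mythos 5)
Your proof is correct, but the converse is argued quite differently from the paper. The paper proves $\code(x)=\code(y)\Rightarrow x\approx y$ by giving an algorithm that reconstructs $\shape(x)$ from $\code(x)$ alone: it peels off value classes from smallest to largest (the rightmost $(0,0)$ together with all $(0,z)$ to its right are exactly the occurrences of the minimum value; remove them, adjust the first coordinates, repeat), and then invokes $x\approx y\Leftrightarrow\shape(x)=\shape(y)$. That is an iteration over \emph{values}, and it needs the entire code at once. You instead iterate over \emph{positions}, showing by induction on the prefix length that the pair $\bigl(\prevlt(w,j+1),\preveq(w,j+1)\bigr)$ pins down exactly where position $j+1$ sits in the total preorder already determined on $\{1,\dots,j\}$ --- the strict monotonicity of the partial class sizes $s_t$ being the point where uniqueness is secured. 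Both arguments are sound. Yours buys something the paper's does not state explicitly: it proves the prefix-wise strengthening that equality of the first $j$ code entries already forces $x[1\dotdot j]\approx y[1\dotdot j]$, which is precisely the online property of Observation~\ref{obs:prefix_properties} and is the combinatorial content behind condition~3 of quasi-suffix collections (the common prefix property in Lemma~\ref{lem:qs_collection}). The paper's value-peeling argument is shorter to state and additionally exhibits $\shape(x)$ as computable from $\code(x)$, which is a mildly stronger conclusion than the bare equivalence; your route is cleaner if one only wants the equivalence and its prefix-local refinement.
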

  \begin{proof}
    The $(\Rightarrow)$ part of the equivalence follows from the definition of a code.
    As for the $(\Leftarrow)$ part, we show an algorithm that reconstructs $\shape(x)$ from $\code(x)$.
    Thus $\code(x)=\code(y)$ implies that $\shape(x)=\shape(y)$ which, in turn, implies
    $x \approx y$.

    The algorithm is as follows.
    Find the rightmost $(0,0)$ in $\code(x)$ (it exists, since
    $\code(x)$ starts with a $(0,0)$).
    Find all elements of the form $(0,z)$ to the right of this $(0,0)$.
    All these elements together with this $(0,0)$ are equal and they
    correspond to 0s in $\shape(x)$.
    Remove all these elements and decrease the first coordinate
    of every other element in $\code(x)$ by the number of removed elements
    that were to its right and repeat the process from the beginning,
    identifying all 1s, 2s etc in $\shape(x)$.
  \qed
  \end{proof}

  \section{Order-Preserving Suffix Trees}\label{sec:op_suftree}
  Let us define the following family of strings:
  $$\SufCodes(w)=\{\code(\suf_1)\#,\,\code(\suf_2)\#,\,\ldots,\,\code(\suf_n)\#\},$$
  see Fig.~\ref{fig:SufCodes}.
  The \emph{order-preserving suffix tree} of $w$, denoted $\opSufTree(w)$,
  is a compacted trie of all the sequences in $\SufCodes(w)$.
  The $\opSufTree(w)$ contains $O(n)$ leaves, hence its size is $O(n)$.

\begin{figure}
{\footnotesize
\begin{verbatim}
          suffixes of w:                       SufCodes(w):

        6 8 2 0 7 9 3 1 4 5                0 1 0 0 3 5 2 1 4 5 #
          8 2 0 7 9 3 1 4 5                  0 0 0 2 4 2 1 4 5 #
            2 0 7 9 3 1 4 5                    0 0 2 3 2 1 4 5 #
              0 7 9 3 1 4 5                      0 1 2 1 1 3 4 #
                7 9 3 1 4 5                        0 1 0 0 2 3 #
                  9 3 1 4 5                          0 0 0 2 3 #
                    3 1 4 5                            0 0 2 3 #
                      1 4 5                              0 1 2 #
                        4 5                                0 1 #
                          5                                  0 #
\end{verbatim}
}
\caption{
  $\SufCodes(w)$ for $w=(6,8,2,0,7,9,3,1,4,5)$.
  In this example all the characters of the string $w$ are distinct, hence for
  each $i$ we have $prev_{=}(w,i)=0$ and we can ignore the second components of $\phi$.
  It suffices to take the first component of the code.
  \label{fig:SufCodes}
}
\end{figure}

  As usual, only the explicit nodes (that is, branching nodes and leaves)
  of $\opSufTree(w)$ are stored.
  The leaves store starting positions of the corresponding suffixes.
  Each branching node stores its depth and one of the leaves in its subtree.
  Each inner node stores a suffix link that may lead to an implicit or an explicit node.

  Each edge stores the code only of its first character.
  The codes of all the remaining characters of any edge can be obtained
  using the so-called \emph{character oracle} that can efficiently
  provide the code $\phi(\suf_i,j)$ for any $i$ and $j$
  (a decription of the character oracle construction is given in Section~\ref{sec:offline_oracle}).

  \begin{example}
    Consider the order-preserving suffix tree of the string
    $$w=(6,8,2,0,7,9,3,1,4,5),$$
    see Fig.~\ref{fig:op_suf_tree}.
    All $\SufCodes(w)$ are given in Fig.~\ref{fig:SufCodes}.

    \begin{figure}[htpb]
      \begin{center}
        \twocol{
          \includegraphics[width=6cm]{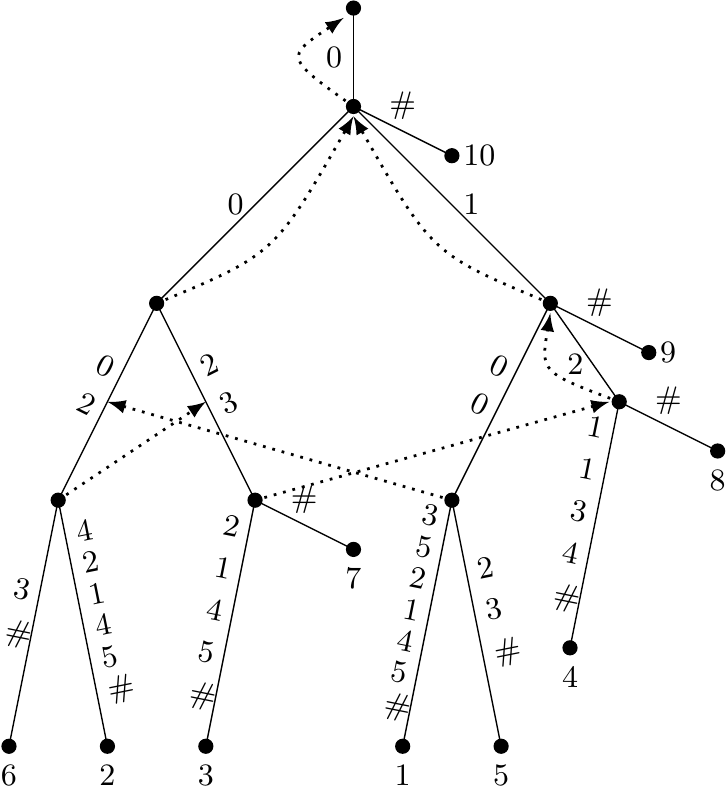}
        }{
          \includegraphics[width=5.5cm]{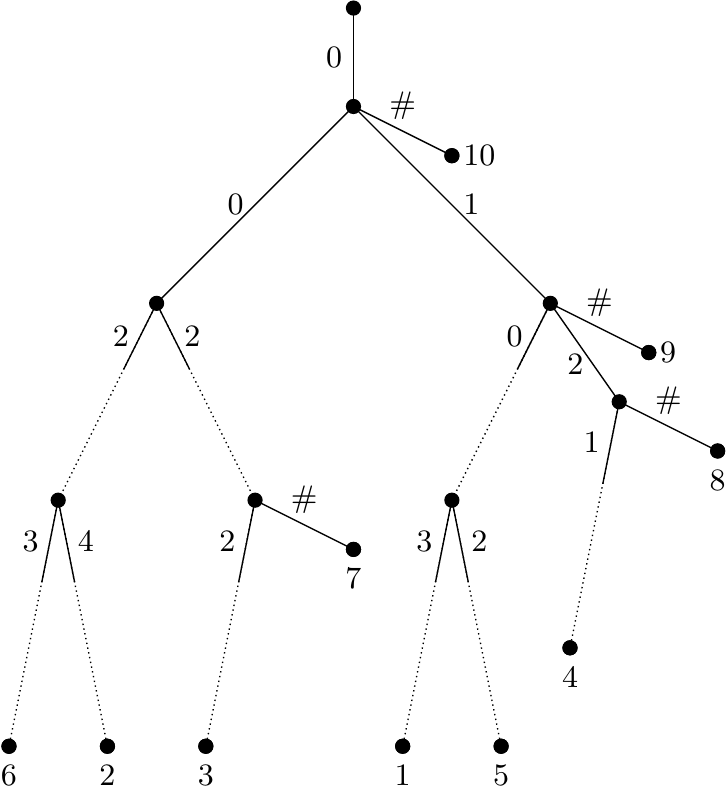}
        }
        \caption{\label{fig:op_suf_tree}
          The uncompacted trie of $\SufCodes(w)$ for $w=(6,8,2,0,7,9,3,1,4,5)$ (to the left)
          and its compacted version which together with character oracle forms
          $\opSufTree(w)$ (to the right).
        }
      \end{center}
    \end{figure}
  \end{example}

  \section{Character Oracle}\label{sec:offline_oracle}
  We use a geometric approach: the computation of $\phi$ for $w$ corresponds
  to counting points in certain orthogonal rectangles in the plane.

  \begin{observation}\label{obs:points_plane}
    Let us treat the pairs $(i,w_i)$ as points in the plane.
    Then $\phi(\suf_j,i)=(a,b)$, where $a$ is the number of points
    that lie within the rectangle $A = [j,i-1] \times (-\infty,w_i)$
    and $b$ is the number of points in the rectangle $B = [j,i-1] \times [w_i,w_i]$,
    see Fig.~\ref{fig:orthogonal}.
  \end{observation}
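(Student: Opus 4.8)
The plan is to unwind the definitions of $\prevlt$ and $\preveq$ for the suffix $\suf_j = w[j\dotdot n]$ and match them against the geometric picture described in the observation. Recall that by definition $\phi(\suf_j,i) = (\prevlt(\suf_j,i), \preveq(\suf_j,i))$, where $i$ here indexes a position \emph{within} the suffix $\suf_j$. I will first fix the convention: position $i$ of $\suf_j$ corresponds to position $j+i-1$ of $w$ (so that the character is $w_{j+i-1}$); alternatively one may state the claim with $i$ already denoting an absolute position in $w$ with $i \ge j$, and I would adopt whichever matches Figure~\ref{fig:orthogonal}. In either case the quantities $\prevlt(\suf_j,i)$ and $\preveq(\suf_j,i)$ count indices $k$ that are strictly to the left of position $i$ \emph{but still inside the suffix}, i.e.\ with the left endpoint cut off at $j$.

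First I would expand $\prevlt(\suf_j,i)$. Letting $p$ be the absolute position in $w$ corresponding to position $i$ of $\suf_j$, we have $w_p$ as the relevant character, and $\prevlt(\suf_j,i)$ counts absolute positions $k$ with $j \le k \le p-1$ and $w_k < w_p$. This is exactly the number of data points $(k,w_k)$ whose first coordinate lies in the integer interval $[j,\,p-1]$ and whose second coordinate lies in $(-\infty,\,w_p)$, which is precisely the count of points in the rectangle $A$. Second, I would expand $\preveq(\suf_j,i)$ identically: it counts absolute positions $k$ with $j\le k\le p-1$ and $w_k = w_p$, which is the number of points with first coordinate in $[j,\,p-1]$ and second coordinate equal to $w_p$, i.e.\ the count of points in the degenerate rectangle $B = [j,p-1]\times[w_p,w_p]$. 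Matching the observation's notation $i-1$ for the right endpoint of the first coordinate simply reflects the convention that $i$ is taken to be the absolute position $p$; with the other convention one writes $j+i-2$.

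The only subtlety — and the single place where care is needed — is the off-by-one/indexing bookkeeping between "position inside the suffix $\suf_j$'' and "position inside $w$,'' together with the fact that the original definitions of $\prevlt,\preveq$ (given for an arbitrary string) must be instantiated with the string $\suf_j$, whose $k$-th character is $w_{j+k-1}$; the constraint "$k<i$ inside $\suf_j$'' translates to "absolute position $<p$ and $\ge j$,'' which is exactly what pins the left edge of rectangles $A$ and $B$ at the coordinate $j$. Once this translation is made explicit, there is nothing left to prove: the two coordinates of $\phi(\suf_j,i)$ are, by definition, set-cardinalities, and these sets are in bijection with the point sets inside $A$ and $B$ respectively. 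I would therefore present the argument as a short direct verification, devoting one sentence to each of the two coordinates and one sentence to fixing the position convention, and close with \qed.
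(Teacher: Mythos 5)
Your proposal is correct: the paper states this as an \emph{Observation} with no proof at all, treating it as immediate from the definitions of $\prevlt$ and $\preveq$, and your direct definitional unwinding (including the careful handling of the relative-vs-absolute position convention, which the figure confirms uses $i$ as an absolute position in $w$) is exactly the intended argument. Nothing is missing.
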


  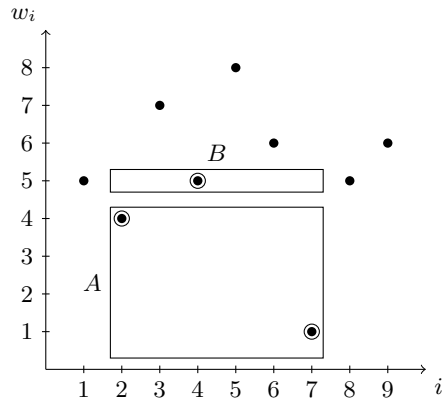
\begin{figure}[htpb]
    \begin{center}

      \begin{tikzpicture}
          \def\unitX{0.5cm}
          \def\unitY{0.5cm}
          \def\queryI{2}
          \def\queryJ{8}
          \def\queryV{5}

          \draw [->] (0,0)--(10*\unitX,0) node [below right] {$i$};
          \draw [->] (0,0)--(0,9*\unitY) node [above left] {$w_i$};

          \foreach \i in {1,2,3,4,5,6,7,8,9} {
              \draw (\i*\unitX,0.1*\unitX)--(\i*\unitX,-0.1*\unitX) node [below] {$\i$};
          }
          \foreach \i in {1,2,3,4,5,6,7,8} {
              \draw (0.1*\unitX,\i*\unitY)--(-0.1*\unitX,\i*\unitY) node [left] {$\i$};
          }
          \foreach \i/\y in {1/5, 2/4, 3/7, 4/5, 5/8, 6/6, 7/1, 8/5, 9/6} {
              \node [fill, circle, scale=0.4] at (\i*\unitX,\y*\unitY) {};
          }

          \draw (\queryI*\unitX-\unitX*0.3, \queryV*\unitY+\unitY*0.3)--
                (\queryJ*\unitX-\unitX*0.7, \queryV*\unitY+\unitY*0.3) node [midway, above] {$B$} --
                (\queryJ*\unitX-\unitX*0.7, \queryV*\unitY-\unitY*0.3)--
                (\queryI*\unitX-\unitX*0.3, \queryV*\unitY-\unitY*0.3)--cycle;

          \draw (\queryI*\unitX-\unitX*0.3, \queryV*\unitY-\unitY*0.7)--
                (\queryJ*\unitX-\unitX*0.7, \queryV*\unitY-\unitY*0.7)--
                (\queryJ*\unitX-\unitX*0.7, \unitY*0.3)--
                (\queryI*\unitX-\unitX*0.3, \unitY*0.3)--
                (\queryI*\unitX-\unitX*0.3, \queryV*\unitY-\unitY*0.7) node [midway, left] {$A$};

          \foreach \i/\y in {2/4, 4/5, 7/1} {
              \draw (\i*\unitX,\y*\unitY) node [circle,minimum size=0.4*\unitX,draw,inner sep=0pt] {};
          }
      \end{tikzpicture}

      \caption{\label{fig:orthogonal}
        Geometric illustration of the sequence $w=(5,4,7,5,8,6,1,5,6)$.
        The elements $w_i$ are represented as points $(i,w_i)$.
        The computation of $\phi(\suf_2,8)=(2,1)$ corresponds to
        counting points in rectangles $A$, $B$.
      }
    \end{center}
  \end{figure}

  \noindent
  The orthogonal range counting problem is defined as follows.
  We are given $n$ points in a plane and we need to answer queries
  of the form:
  
  ``how many points are contained in a given axis-aligned rectangle?''.

  \noindent
  An efficient solution to this problem was given by Chan and
  P\v{a}tra\c{s}cu, see Theorem 2.3\ in~\cite{DBLP:conf/soda/ChanP10}
  which we state below as Lemma~\ref{lem:orthogonal}.
  We say that a point $(p,q)$ dominates a point $(p',q')$ if $p>p'$ and $q>q'$.

  \begin{lemma}\label{lem:orthogonal}
    We can preprocess $n$ points in the plane
    in $O(n \sqrt{\log n})$ time, using a data structure with $O(n)$
    words of space, so that we can count the number of
    points dominated by a query point in $O(\log n/ \log \log n)$ time.
  \end{lemma}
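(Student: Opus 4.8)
This is Theorem~2.3 of~\cite{DBLP:conf/soda/ChanP10}; we only outline the ideas. The plan is to pass to \emph{rank space}, where counting the points dominated by a query becomes a \emph{partial rank} query on a permutation, to answer such queries with a wavelet tree of large fanout, and to use word-level parallelism both to make each step of a query take $O(1)$ time and, crucially, to keep the preprocessing down to $O(n\sqrt{\log n})$.

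\emph{Reduction to rank space.} Since the universe is polynomially bounded, radix-sort the $n$ points by $x$-coordinate and by $y$-coordinate in $O(n)$ time and replace each coordinate by its rank; the point set becomes $\{(i,\pi(i)):1\le i\le n\}$ for a permutation $\pi$ of $\{1,\ldots,n\}$. An arbitrary query point is located in rank space by two predecessor searches, supported in $O(\log n/\log\log n)$ time by a packed predecessor structure built on the sorted coordinates in $o(n\sqrt{\log n})$ time. Counting the points dominated by the query then becomes: given $i_0$ and $v_0$, return $|\{i<i_0:\pi(i)<v_0\}|$.

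\emph{The structure.} Fix a fanout $B=\lceil\sqrt{\log n}\,\rceil$ and build a balanced $B$-ary tree over the set of $y$-ranks; it has depth $O(\log_B n)=O(\log n/\log\log n)$. With a node $u$ associate the list of points whose $y$-rank lies in the $y$-range of $u$, kept sorted by $x$-rank, and store at $u$ only the string $s_u$ over $\{1,\ldots,B\}$ recording, for those points in $x$-order, which child of $u$ each of them belongs to. A query is answered by descending from the root along the child whose $y$-range contains $v_0$, maintaining the number $\ell$ of points of the current node with $x$-rank below $i_0$ (so $\ell=i_0-1$ at the root). At a node $u$ whose relevant child has index $c$, add to the running answer the number of positions $j\le\ell$ with $s_u[j]<c$, and then recurse into child $c$ with $\ell$ replaced by the number of positions $j\le\ell$ with $s_u[j]=c$. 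Both quantities are instances of one primitive, ``count the symbols smaller than (resp.\ equal to) a given value among a prefix of a string over $\{1,\ldots,B\}$''; packing $\Theta(\log n/\log\log n)$ symbols of $s_u$ into a machine word and using precomputed lookup tables, this primitive runs in $O(1)$ time, so the whole query costs $O(\log n/\log\log n)$.

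\emph{Resources and the main obstacle.} At each level the strings $s_u$ jointly contain the $n$ symbols that partition the point set, so over the whole tree they contain $O(n\log n/\log\log n)$ symbols of $\Theta(\log\log n)$ bits each, i.e.\ $O(n\log n)$ bits; with the bit-packed prefix-count words that support the $O(1)$ primitive this is still $O(n\log n)$ bits, that is $O(n)$ words, and the tree skeleton, the predecessor structure and the lookup tables add only $O(n)$ more words. The nontrivial part is the $O(n\sqrt{\log n})$ preprocessing: routing the points down the tree one level at a time in the straightforward way costs $O(n\log n/\log\log n)$, which is too slow, so instead one builds the packed strings $s_u$ with word-level parallelism, stably distributing $\Theta(\sqrt{\log n})$ coordinates at a time through precomputed tables and simultaneously assembling the auxiliary counting data; carrying this out correctly is the technical heart of~\cite{DBLP:conf/soda/ChanP10}.
\qed
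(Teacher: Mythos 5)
The paper does not prove Lemma~\ref{lem:orthogonal} at all: it is imported verbatim as Theorem~2.3 of Chan and P\v{a}tra\c{s}cu \cite{DBLP:conf/soda/ChanP10} and used as a black box, so any proof you write is necessarily ``a different route.'' Your proposal correctly identifies the source and then sketches the internals of that external result, and the sketch is sound as an outline: the rank-space reduction, the balanced tree of fanout $\Theta(\sqrt{\log n})$ over the $y$-ranks (a generalized wavelet tree), the descent that turns a dominance query into $O(\log n/\log\log n)$ prefix-counting primitives over a small alphabet, and the $O(n\log n)$-bit, i.e.\ $O(n)$-word, accounting are all consistent with \cite{DBLP:conf/soda/ChanP10}. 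Two caveats are worth recording. First, the $O(1)$ prefix-count primitive cannot literally be ``a precomputed lookup table'' indexed by a whole packed word (such a table would have superpolynomial size); one needs cumulative counts stored at word or block boundaries combined with tables indexed by half-words, or equivalent bit tricks. Second, and more substantively, the only genuinely hard part of the theorem --- constructing the packed representation in $O(n\sqrt{\log n})$ rather than $O(n\log n/\log\log n)$ time --- is explicitly deferred to the cited paper in your last paragraph, so your text is an annotated citation rather than a self-contained proof. That is, however, exactly the status the lemma has in the paper itself; for the authors' purposes (Lemma~\ref{lem:oracle} and the suffix-tree construction) only the stated bounds matter, and your outline gives the reader strictly more insight than the original presentation does.
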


  \noindent
  One can easily observe that the offline orthogonal range counting
  can be reduced to the dominance problem described in Lemma~\ref{lem:orthogonal}.
  We use the solution from this lemma to build our character oracle.

  \begin{lemma}\label{lem:oracle}
    Let $w$ be a string of length $n$ and let $\suf_1,\ldots,\suf_n$ be its suffixes.
    After $O(n\sqrt{\log{n}})$ time and $O(n)$ space preprocessing
    one can compute $\phi(\suf_j,i)$ for any $i$, $j$
    in $O(\log{n}/\log\log{n})$ time.
  \end{lemma}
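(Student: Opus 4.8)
The plan is to turn each query for $\phi(\suf_j,i)$ into a constant number of orthogonal range counting queries over one fixed set of $n$ points, and then to invoke Lemma~\ref{lem:orthogonal}. During preprocessing I would form the point set $P=\{(k,w_k):1\le k\le n\}$ and feed it to the data structure of Lemma~\ref{lem:orthogonal}; since $\Sigma$ is polynomially bounded, the points (after replacing the values by their ranks, if convenient) are prepared and preprocessed in $O(n\sqrt{\log n})$ time using $O(n)$ words of space.

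First I would record, using Observation~\ref{obs:points_plane}, the cost of a single query. For $1\le j\le i\le n$, the first coordinate of $\phi(\suf_j,i)$ is the number of points of $P$ in the region $[j,i-1]\times(-\infty,w_i)$, and the second coordinate is the number of points of $P$ in the degenerate rectangle $[j,i-1]\times[w_i,w_i]$. The important observation is that $P$ does not depend on $j$: passing from one suffix to another only moves the left edge of the query rectangle. This is precisely the online property of the code from Observation~\ref{obs:prefix_properties}; the same reduction would be impossible for $\shape$, which depends on the entire string.

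Next I would reduce the two range counts above to dominance counts, the operation actually provided by Lemma~\ref{lem:orthogonal}. Counting points with first coordinate in $[j,i-1]$ and second coordinate strictly below $w_i$ is the difference of two dominance counts, with $x$-thresholds $i-1$ and $j-1$; counting points in the thin rectangle $[j,i-1]\times[w_i,w_i]$ is a four-corner inclusion--exclusion in which subtracting the count for $y<w_i$ from the count for $y\le w_i$ isolates $y=w_i$. All coordinates are integers, so every closed or half-open condition is converted to the strict dominance of Lemma~\ref{lem:orthogonal} by shifting query thresholds by $\frac12$ (equivalently, by using ranks at preprocessing time). Hence $\phi(\suf_j,i)$ costs $O(1)$ dominance queries, i.e.\ $O(\log n/\log\log n)$ time, and the preprocessing time and space are those of Lemma~\ref{lem:orthogonal}.

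The step that needs the most care is the bookkeeping for the degenerate rectangle $B$ and the strict-versus-nonstrict inequalities: one must make the $\pm\frac12$ shifts (or the rank transformation) consistent across the $O(1)$ dominance queries so that points with value exactly $w_i$, and points at the boundary positions $k=j-1$ and $k=i$, land on the intended side. The geometric identity of Observation~\ref{obs:points_plane} and the textbook reduction of orthogonal range counting to dominance counting are otherwise routine.
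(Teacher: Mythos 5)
Your proposal is correct and follows exactly the paper's route: reduce $\phi(\suf_j,i)$ to orthogonal range counting via Observation~\ref{obs:points_plane}, then invoke the Chan--P\v{a}tra\c{s}cu structure of Lemma~\ref{lem:orthogonal}, with the rectangle-to-dominance inclusion--exclusion that the paper mentions only in passing spelled out explicitly. No gaps; you have simply written out the "easily observed" reduction in more detail than the paper does.
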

  \begin{proof}
    Due to Observation~\ref{obs:points_plane} our problem can be reduced to an orthogonal range counting problem.
    Using Lemma~\ref{lem:orthogonal} we obtain a solution to this problem
    with the requested preprocessing and query time and space.
  \qed
  \end{proof}
  

  \section{Construction of Order-Preserving Suffix Trees}\label{sec:op_suftree_construction}
  We use the tools introduced by Cole and Hariharan \cite{DBLP:journals/siamcomp/ColeH03a}
  for the construction of suffix trees for quasi-suffix collections of strings.

  \subsection{Quasi-suffix Collections}

  A family of strings $S_1,\ldots,S_n$ is called
  a quasi-suffix collection \cite{DBLP:journals/siamcomp/ColeH03a}
  if the following conditions hold:
  \begin{enumerate}
    \item $|S_1|=n$ and $|S_i|=|S_{i-1}|-1$ for all $i>1$.
    \item No $S_i$ is a prefix of another $S_j$.
    \item If $S_i$ and $S_j$ have a common prefix of length $l>0$
    then $S_{i+1}$ and $S_{j+1}$ have a common prefix of length
    at least $l-1$.
  \end{enumerate}

  \noindent
  The \emph{suffix tree} for a quasi-suffix collection is defined as a compacted trie
  of all the strings in the collection.

  \begin{lemma}\label{lem:qs_collection}
    Let $w$ be a string of length $n$.
    Then the strings in $\SufCodes(w)$ form a quasi-suffix collection.
  \end{lemma}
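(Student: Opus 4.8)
The plan is to check the three defining conditions directly for $S_i := \code(\suf_i)\#$, using only Lemma~\ref{lem:code} and the online property of codes (Observation~\ref{obs:prefix_properties}). Since $|\suf_i| = n-i+1$ we have $|S_i| = n-i+2$, so $|S_i| = |S_{i-1}|-1$ and condition~1 holds (the prescribed value $|S_1|=n$ versus $n+1$ here is an immaterial shift — Cole and Hariharan's construction only uses that consecutive lengths drop by one). Condition~2 follows because $\#$ occurs in each $S_i$ only as its last character: if $S_i$ were a prefix of $S_j$, the trailing $\#$ of $S_i$ would lie at position $|S_i|$ of $S_j$, forcing $|S_i|=|S_j|$ and hence $i=j$.

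The content is condition~3. Let $S_i$ and $S_j$ share a prefix of length $l>0$. We may assume $i\neq j$ (otherwise $S_{i+1}=S_{j+1}$ and there is nothing to prove) and $l\geq 2$ (for $l=1$ the claimed common prefix of length $l-1=0$ is vacuous). If $l$ exceeded $\min(|\code(\suf_i)|,|\code(\suf_j)|)$ then the shared prefix would include a position holding $\#$ in one string, which — by uniqueness of $\#$ — again forces $i=j$, a contradiction; hence $l\leq|\code(\suf_i)|$ and $l\leq|\code(\suf_j)|$, so in particular $i,j\leq n-1$ and $S_{i+1},S_{j+1}$ exist. By the online property, $\code(w[i\dotdot i+l-1])$ is exactly the length-$l$ prefix of $\code(\suf_i)$, and likewise for $j$; since these prefixes coincide, $\code(w[i\dotdot i+l-1])=\code(w[j\dotdot j+l-1])$, and Lemma~\ref{lem:code} yields $w[i\dotdot i+l-1]\approx w[j\dotdot j+l-1]$.

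Next I would invoke the elementary fact that order-isomorphism restricts to corresponding substrings: deleting the first position of each window gives $w[i+1\dotdot i+l-1]\approx w[j+1\dotdot j+l-1]$, two strings of length $l-1$. Running Lemma~\ref{lem:code} the other way, $\code(w[i+1\dotdot i+l-1])=\code(w[j+1\dotdot j+l-1])$, and the online property identifies this common code with the length-$(l-1)$ prefix of $\code(\suf_{i+1})$ and simultaneously of $\code(\suf_{j+1})$. Hence $S_{i+1}$ and $S_{j+1}$ agree on their first $l-1$ characters, which is condition~3.

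The only real care needed — and it is mild — is the bookkeeping that identifies a length-$l$ prefix of $\code(\suf_i)$ with the code of the corresponding window $w[i\dotdot i+l-1]$ (this is exactly the online property, so no new work), together with cleanly isolating the degenerate cases where the alleged common prefix reaches the sentinel $\#$. Once these are dispatched, the argument is just two applications of Lemma~\ref{lem:code} bracketing the observation that $\approx$ is preserved under passing to corresponding substrings, so I do not anticipate any genuine obstacle.
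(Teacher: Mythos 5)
Your proof is correct and follows essentially the same route as the paper: the paper reduces condition~3 to the claim that $\code(ax)=\code(by)$ implies $\code(x)=\code(y)$, proved by applying Lemma~\ref{lem:code} in both directions around the fact that $\approx$ survives deletion of the first position, which is exactly your argument. You merely spell out the sentinel bookkeeping and the online-property identification of code prefixes with window codes, which the paper treats as obvious.
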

  \begin{proof}
    The conditions 1 and 2 of a quasi-suffix collection obviously hold.
    The condition 3 is a direct consequence of the common prefix property
    (cf. \cite{DBLP:journals/jcss/Baker96}):

    \begin{myclaim}
      If $\code(ax)=\code(by)$ then $\code(x)=\code(y)$.
    \end{myclaim}
    \begin{proof}
      Due to Lemma~\ref{lem:code}, $\code(ax)=\code(by)$ implies that $ax \approx by$.
      Hence, obviously $x \approx y$.
      Again due to Lemma~\ref{lem:code} we have $\code(x)=\code(y)$.
      \qed
    \end{proof}

    \noindent
    Consequently, $\SufCodes(w)$ satisfies all conditions for a quasi-suffix collection.
  \qed
  \end{proof}

  \subsection{Order-Preserving Suffix-Tree Construction}\label{sec:nondeterministic_offline}
  Cole and Hariharan \cite{DBLP:journals/siamcomp/ColeH03a} provided
  a general framework for constructing suffix trees for
  quasi-suffix collections $(S_i)$.
  Assuming they are given a character oracle that provides the $j$-th character
  of any $S_i$ in $O(1)$ time, Cole and Hariharan \cite{DBLP:journals/siamcomp/ColeH03a}
  can construct the suffix tree for a quasi-suffix collection in
  $O(n)$ time and space with almost inverse exponential failure probability.
  This result assumes that $S_i$ are over an alphabet of size polynomial in $n$.

  We apply this result to obtain an order-preserving suffix tree
  by using the character oracle that we developed in Section~\ref{sec:offline_oracle}.

  \begin{theorem}
    The order-preserving suffix tree of a string of length $n$
    can be constructed in $O(\frac{n\log{n}}{\log\log{n}})$ randomized time.
  \end{theorem}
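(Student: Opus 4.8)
The plan is to assemble three facts: that $\SufCodes(w)$ forms a quasi-suffix collection, that Cole and Hariharan \cite{DBLP:journals/siamcomp/ColeH03a} construct suffix trees of such collections in linear time given a constant-time character oracle, and that Lemma~\ref{lem:oracle} supplies an oracle running in $O(\log n/\log\log n)$ time per query. First I would invoke Lemma~\ref{lem:qs_collection}: the sequences $\code(\suf_1)\#,\ldots,\code(\suf_n)\#$ form a quasi-suffix collection, so their compacted trie---which is exactly $\opSufTree(w)$---is constructible by the framework of \cite{DBLP:journals/siamcomp/ColeH03a}. A routine preliminary is to check the alphabet bound that framework requires: a character of $\code(\suf_i)$ is a pair $(\prevlt,\preveq)\in\{0,\ldots,n-1\}^2$, plus the sentinel $\#$, so encoding a character as $\prevlt\cdot n+\preveq$ (with a separate value for $\#$) yields an alphabet of size $O(n^2)$, polynomial in $n$ as demanded.

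Next I would plug in the oracle. Cole and Hariharan assume a unit-time routine returning the $j$-th character of $S_i$; with $S_i=\code(\suf_i)\#$ this is either $\#$ (when $j$ is the last position) or $\phi(\suf_i,j)$, which Lemma~\ref{lem:oracle} answers in $O(\log n/\log\log n)$ time after $O(n\sqrt{\log n})$ preprocessing. Then comes the running-time accounting: the Cole--Hariharan algorithm runs in $O(n)$ time and $O(n)$ space with a unit-time oracle, so in particular it issues only $O(n)$ oracle queries and all of its remaining work is oblivious to the oracle's cost. Substituting our $O(\log n/\log\log n)$-time oracle therefore gives total time $O(n\sqrt{\log n}) + O(n)\cdot O(\log n/\log\log n) = O\!\big(\tfrac{n\log n}{\log\log n}\big)$ and space $O(n)$, with the failure probability inherited from \cite{DBLP:journals/siamcomp/ColeH03a}.

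The step I expect to be the main obstacle is this last accounting argument: one must be confident that the $\Theta(\log n/\log\log n)$ slowdown multiplies only the oracle calls and not the entire linear-time computation, i.e.\ that Cole--Hariharan genuinely make $O(n)$ oracle queries---true because their algorithm is $O(n)$-time overall, but worth pinning down explicitly by reference to their analysis. A minor additional point is to verify that the oracle behaves correctly at the sentinel position and that the pair-encoding above meets their polynomial-alphabet hypothesis; both are straightforward but should be mentioned.
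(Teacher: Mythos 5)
Your proposal follows exactly the paper's argument: apply Lemma~\ref{lem:qs_collection} to establish the quasi-suffix property, run the Cole--Hariharan construction with the oracle of Lemma~\ref{lem:oracle}, and observe that the $O(n)$ oracle calls each cost $O(\log n/\log\log n)$. The extra care you take about the polynomial alphabet encoding and the sentinel is a sensible addition but does not change the route.
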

  \begin{proof}
    Due to Lemma~\ref{lem:qs_collection}, we can apply the algorithm
    of Cole and Hariharan \cite{DBLP:journals/siamcomp/ColeH03a}.
    We use the character oracle from Lemma~\ref{lem:oracle}.
    Cole and Hariharan \cite{DBLP:journals/siamcomp/ColeH03a} call the oracle
    $O(n)$ time, which gives $O(\frac{n\log{n}}{\log\log{n}})$ total construction time.
  \qed
  \end{proof}

  The framework of Cole and Hariharan \cite{DBLP:journals/siamcomp/ColeH03a}
  is based on McCreight's algorithm for suffix tree construction \cite{DBLP:journals/jacm/McCreight76}
  which is an offline algorithm.
  Recently Lee, Na and Park \cite{Lee2011201} presented a modified
  version of the algorithm from \cite{DBLP:journals/siamcomp/ColeH03a}
  that uses Ukkonen's suffix tree construction algorithm \cite{DBLP:journals/algorithmica/Ukkonen95}
  which works online (the characters of the string can be given one at a time).
  The construction of Lee, Na and Park is designed only for parameterized
  suffix trees, however it works also in the general quasi-suffix setting,
  hence, in particular, in the order-preserving setting.
  Using this construction, an alternative $O(\frac{n\log{n}}{\log\log{n}})$ time
  algorithm for order-preserving suffix trees can be obtained.
  Unfortunately, our oracle does not work online and thus the resulting
  algorithm is still offline.

%

  \section{Applications of Order-Preserving Suffix Trees}\label{sec:applications}
  The most common application of suffix trees is pattern matching with
  time complexity independent of the length of the text.
  With the aid of order-preserving suffix trees we obtain a similar result
  with an additional small factor in the time complexity
  which is due to the character oracle.
  This result is possible due to the ``suffix-independence''
  property of our coding function, see Observation~\ref{obs:prefix_properties}.

  \begin{theorem}\label{thm:pattern_matching}
    Assume that we have $\opSufTree(w)$ of a string $w$ of length $n$.
    Given a pattern $x$ of length $m$, one can check if $x$ is
    a factor of $w$ in $O(\frac{m\log{n}}{\log\log{n}})$ time
    and report all occurrences in $O(\frac{m\log{n}}{\log\log{n}}+\Occ)$ time,
    where $\Occ$ is the number of occurrences.
  \end{theorem}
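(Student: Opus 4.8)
The plan is to imitate the classical pattern matching algorithm on suffix trees, adapting it to the order-preserving setting via the coding function $\code$. First I would observe that, by Lemma~\ref{lem:code}, an occurrence of $x$ in $w$ starting at position $i$ (i.e.\ $x \approx w[i \dotdot i+m-1]$) is equivalent to $\code(x)$ being a prefix of $\code(\suf_i)$. The crucial point enabling this is Observation~\ref{obs:prefix_properties}: since the code has the online property, $\code(x)$ depends only on $x$ itself and coincides with the length-$m$ prefix of $\code(\suf_i)$ exactly when $x \approx w[i\dotdot i+m-1]$. Therefore, searching for $x$ reduces to following the path spelled by $\code(x)$ in $\opSufTree(w)$.

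Next I would describe the traversal. We first compute $\code(x)$ in $O(m)$ time (by the online definition of $\prevlt$ and $\preveq$, using e.g.\ a balanced structure on the already-seen values; since $\Sigma$ is polynomially bounded this is at most $O(m\log m)$, which is dominated by the claimed bound). Then we descend from the root of $\opSufTree(w)$ matching the characters of $\code(x)$ one by one. The subtlety compared to an ordinary suffix tree is that edges of $\opSufTree(w)$ store only the code of their first character; to recover the code of the $j$-th character along an edge corresponding to $\suf_i$ we invoke the character oracle from Lemma~\ref{lem:oracle}, which answers $\phi(\suf_i,j)$ in $O(\log n/\log\log n)$ time after $O(n\sqrt{\log n})$ preprocessing (absorbed into the construction of $\opSufTree(w)$). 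Thus each of the $m$ character comparisons costs $O(\log n/\log\log n)$, giving $O(\frac{m\log n}{\log\log n})$ total for deciding membership. If the descent consumes all of $\code(x)$ we arrive at an explicit or implicit node whose subtree leaves are exactly the occurrences; if it gets stuck, $x$ is not a factor.

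For reporting, once the locus of $\code(x)$ is found, the number of occurrences $\Occ$ equals the number of leaves in the subtree rooted at that locus, and each such leaf stores the starting position of the corresponding suffix. Traversing the subtree and outputting these positions takes $O(\Occ)$ additional time (the subtree has $O(\Occ)$ explicit nodes since every internal node of a compacted trie is branching), yielding the $O(\frac{m\log n}{\log\log n}+\Occ)$ bound.

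The main obstacle, and the one point requiring genuine care rather than routine bookkeeping, is the justification that matching $\code(x)$ character-by-character against the tree is both sound and complete: we must argue that a partial match of a proper prefix of $\code(x)$ does \emph{not} accidentally certify a shorter order-isomorphism in a way that breaks the descent, and conversely that every genuine occurrence is found. This is exactly where Observation~\ref{obs:prefix_properties} and the Claim inside the proof of Lemma~\ref{lem:qs_collection} (that $\code(ax)=\code(by)$ implies $\code(x)=\code(y)$) do the work: they guarantee that $\code$ behaves like an ordinary string labelling with respect to prefixes, so the standard suffix-tree search argument transfers verbatim. Once this correspondence is in place, the complexity analysis is immediate from Lemma~\ref{lem:oracle}.
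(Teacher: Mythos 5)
Your proposal follows essentially the same route as the paper: descend $\opSufTree(w)$ along the path spelled by the code of $x$, using the text's character oracle (Lemma~\ref{lem:oracle}) to recover edge labels at implicit positions at a cost of $O(\log n/\log\log n)$ per matched character, and then enumerate the leaves below the locus in $O(\Occ)$ additional time. Your correctness discussion via Observation~\ref{obs:prefix_properties} and the common-prefix property of $\code$ is sound and in fact more explicit than the paper's.

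The one concrete flaw is in the preprocessing of the pattern. You compute $\code(x)$ with a balanced search structure in $O(m\log m)$ time and assert that this is dominated by the claimed bound, but it is not: for $m=\Theta(n)$ we get $m\log m=\Theta(n\log n)$, which exceeds $O(\frac{n\log n}{\log\log n})$ by a $\log\log n$ factor, so as written your algorithm does not meet the stated complexity. The paper sidesteps this by building the character oracle of Lemma~\ref{lem:oracle} for the pattern itself, which costs $O(m\sqrt{\log m})$ preprocessing and $O(\log m/\log\log m)$ per queried code; both quantities are genuinely $O(\frac{m\log n}{\log\log n})$ once one assumes $m\le n$ (otherwise there is trivially no occurrence). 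Substituting that oracle (or any method computing $\code(x)$ in, say, $O(m\sqrt{\log m})$ time) for your balanced structure repairs the bound without changing anything else in your argument.
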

  \begin{proof}
    First we build the character oracle for the pattern, this takes
    $O(m\sqrt{\log{m}}) = O(\frac{m\log{n}}{\log\log{n}})$ time.
    To answer a query, we traverse down the edges of the suffix tree
    using the character oracles for the pattern and the text.
    If we are at a branching node of depth $h$, we check if there is an outgoing edge
    starting with $\phi(x[1\dotdot h],x[h])$.
    Otherwise we are at an implicit node of depth $h$ located on an edge leading
    to an explicit node that has some leaf $i$ in its subtree.
    In this case we check if
    $\phi(x[1\dotdot k],x[k])$ equals $\phi(w[i\dotdot i+h],w[i+h])$.

    This enables to find the locus of $x$ in $O(\frac{m\log{n}}{\log\log{n}})$ time.
    Afterwards all the occurrences of $x$ can be found in the usual way
    by traversing all nodes in the corresponding subtree.
  \qed
  \end{proof}

%

  \noindent
  A string $uv$ is called an \emph{order-preserving square} (an \emph{op-square}, in short)
  if $u \approx v$.
  The length of the op-square is defined as $|uv|$.
  Thus an op-square represents a repeating pattern in a time series.
  Using order-preserving suffix trees we can obtain algorithms for finding
  and reporting op-squares.

  Note that each string of length at least 2 contains an op-square of length 2.
  Hence, no such string is op-square-free.
  We show how to modify the square-detecting algorithm by
  Gusfield and Stoye \cite{DBLP:journals/tcs/StoyeG02} to check, for each length $k$,
  if a given string $w$ contains an op-square of length $k$.

  \paragraph{\bf Branching squares.}
  We say that a substring $w[i \dotdot i+2k-1]$ is a \emph{branching square}
  if $w[i \dotdot i+k-1] = w[i+k\dotdot i+2k-1]$ and $w[i+2k] \ne w[i]$.
  The algorithm from \cite{DBLP:journals/tcs/StoyeG02} uses the suffix tree of a text $w$,
  $|w|=n$, to find all \emph{branching squares} in $w$ in $O(n\log{n})$ time.
  Each such square is detected when inspecting the edges outgoing from
  the explicit node corresponding to the string $w[i \dotdot i+k-1]$.

  \paragraph{\bf Non-extendible and non-shiftable op-squares.}
  We say that an op-square $w[i \dotdot i+2k-1]$ is \emph{non-extendible}
  if $$w[i \dotdot i+k-1]\approx w[i+k\dotdot i+2k-1]$$
  and
  $$w[i \dotdot i+k]\not\approx w[i+k\dotdot i+2k].$$
  A \emph{non-shiftable} op-square is defined similarly but with
  the last condition substituted by
  $$w[i+1 \dotdot i+k]\not\approx w[i+k+1\dotdot i+2k].$$
  When we apply algorithm from \cite{DBLP:journals/tcs/StoyeG02}
  to the order-preserving suffix tree, we find all non-extendible op-squares.
  It suffices to prove the following property.

  \begin{lemma}
    If $w$ contains an op-square of a given length then it contains a non-extendible
    op-square of the same length.
  \end{lemma}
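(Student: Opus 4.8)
The plan is to take an arbitrary op-square in $w$ of length $2k$, say $w[i \dotdot i+2k-1]$ with $w[i \dotdot i+k-1]\approx w[i+k \dotdot i+2k-1]$, and to slide it rightward as long as it stays an op-square, until we reach one that cannot be extended in the sense required. Formally, I would consider the largest index $p\ge i$ such that $w[p \dotdot p+2k-1]$ is still an op-square (i.e.\ $w[p \dotdot p+k-1]\approx w[p+k \dotdot p+2k-1]$); such a $p$ exists and is finite because the original square witnesses that the set of valid starting positions is nonempty, and positions near the end of $w$ cannot start a length-$2k$ op-square. By maximality, either $p+2k-1=n$ (the square touches the right end), or $w[p \dotdot p+2k]$ is no longer an op-square, which by the contrapositive gives exactly $w[p \dotdot p+k]\not\approx w[p+k \dotdot p+2k]$; in the boundary case one can append a sentinel (or argue directly) so that the non-extendibility condition holds vacuously. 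Either way $w[p \dotdot p+2k-1]$ is a non-extendible op-square of length $2k$.

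The key point that makes the sliding step work is a shift-invariance property of order-isomorphism analogous to the Claim already used for quasi-suffix collections: I would first establish that if $w[p \dotdot p+k-1]\approx w[p+k \dotdot p+2k-1]$ and additionally $w[p+1 \dotdot p+k]\approx w[p+k+1 \dotdot p+2k]$, then $w[p+1 \dotdot p+k]\approx w[p+k+1 \dotdot p+2k]$ — but this is immediate, so what I actually need is the converse direction relating extendibility to shiftability. Concretely, if $w[p \dotdot p+2k-1]$ is an op-square but $w[p+1 \dotdot p+2k]$ is not (which is what prevents further sliding), then looking at the two windows $w[p \dotdot p+k-1]$ and $w[p+k \dotdot p+2k-1]$ after dropping their first characters and appending the next characters $w[p+k]$ and $w[p+2k]$ respectively, the failure of isomorphism of the shifted pair must already be "visible" one step earlier, i.e.\ it yields the non-extendibility inequality $w[p \dotdot p+k]\not\approx w[p+k \dotdot p+2k]$. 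I would prove this by contraposition using the online/prefix nature of codes (Observation~\ref{obs:prefix_properties} together with Lemma~\ref{lem:code}): if $w[p \dotdot p+k]\approx w[p+k \dotdot p+2k]$ and $w[p \dotdot p+k-1]\approx w[p+k \dotdot p+2k-1]$, then combining the isomorphisms shows $w[p \dotdot p+2k]$ is an op-square of length $2k$ extending the previous one, contradicting maximality of $p$; and from $w[p \dotdot p+k]\approx w[p+k \dotdot p+2k]$ one deduces $w[p+1 \dotdot p+k]\approx w[p+k+1 \dotdot p+2k]$ by the common-prefix Claim, again a contradiction.

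The main obstacle I anticipate is the boundary case where the slid square runs off the right end of $w$ before any non-extendibility witness appears: then the character $w[p+2k]$ simply does not exist, and one has to be careful that the definition of non-extendible op-square is satisfied. The clean fix is to pad $w$ with an artificial sentinel character that is, say, strictly smaller than every symbol of $w$ (or distinct from all of them), guaranteeing $w[p \dotdot p+k]\not\approx w[p+k \dotdot p+2k]$ at the boundary because the last positions of the two extended windows compare differently to their predecessors; since the sentinel lies outside the original op-square region it does not create spurious op-squares of length $2k$ inside $w$. A second, minor technical point is that the windows being compared overlap when $k$ is small, so I would double-check that the isomorphism-composition arguments do not implicitly assume disjointness — they do not, since $\approx$ is a statement purely about the sequence of values in each window. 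With these points handled, stringing together the sliding argument gives the lemma.
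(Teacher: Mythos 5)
Your proposal is correct and follows essentially the same route as the paper: you take the op-square of length $2k$ with maximal starting position, observe that maximality makes it non-shiftable, and then derive non-extendibility from non-shiftability via the common-prefix property of order-isomorphism ($ax\approx by$ implies $x\approx y$). Your extra care with the right-boundary case (where $w[p+2k]$ does not exist) addresses a detail the paper glosses over, but it does not change the argument.
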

  \begin{proof}
    Let $w[i \dotdot i+2k-1]$ be the rightmost op-square of length $2k$ in $w$.
    Then it is a non-shiftable op-square:
    $$w[i+1 \dotdot i+k]\not\approx w[i+k+1\dotdot i+2k].$$
    Hence,
    $$w[i \dotdot i+k]\not\approx w[i+k\dotdot i+2k]$$
    and consequently $w[i \dotdot i+2k-1]$ is a non-extendible op-square.
    \qed
  \end{proof}

  \noindent
  Consequently we obtain an efficient algorithm for detecting an op-square of a given length.
  Note that the algorithm does not require to query the character oracle, it only processes the skeleton
  of the suffix tree.

  \begin{theorem}
    For a string $w$ of length $n$, after $O(n\log{n})$ time preprocessing one can
    check if $w$ contains an op-square of a given length in $O(1)$ time.
  \end{theorem}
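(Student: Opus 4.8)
The plan is to reduce the problem to the branching-square detection of Gusfield and Stoye applied to $\opSufTree(w)$, and then show that a single $O(n\log n)$-time pass suffices to precompute, for every length $k$, a bit saying whether an op-square of length $2k$ exists. First I would run the algorithm of \cite{DBLP:journals/tcs/StoyeG02} on the skeleton of $\opSufTree(w)$: for each explicit node $v$ at string-depth $k$, inspect the edges outgoing from $v$ to detect all branching squares, which in the order-preserving setting become exactly the non-extendible op-squares of length $2k$ (this is the content of the preceding lemma combined with the observation that $w[i\dotdot i+k-1]\approx w[i+k\dotdot i+2k-1]$ iff $\code(\suf_i)$ and $\code(\suf_{i+k})$ share a prefix of length $k$, i.e. iff $\suf_i,\suf_{i+k}$ are in the subtree of a common explicit node of depth $\ge k$). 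By the lemma, if $w$ contains any op-square of length $2k$, it contains a non-extendible one, so this pass discovers at least one representative for every ``good'' length $k$.

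Next I would allocate a boolean array $T[1\dotdot n]$, initialised to \textbf{false}, and whenever the Gusfield--Stoye procedure reports a (non-extendible) op-square of length $2k$ I set $T[k]\leftarrow\textbf{true}$. Crucially we do not need to enumerate \emph{all} branching squares to fill $T$ correctly — we only need one per length — but since the algorithm of \cite{DBLP:journals/tcs/StoyeG02} already runs in $O(n\log n)$ total time, we may simply let it report everything and update $T$ each time; the total cost is $O(n\log n)$. After this preprocessing, a query ``does $w$ contain an op-square of length $\ell$?'' is answered in $O(1)$: if $\ell$ is odd, return \textbf{false} (every op-square has even length $|uv|$ with $|u|=|v|$); if $\ell=2k$, return $T[k]$.

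The main subtlety — and the step I expect to require the most care — is verifying that the branching-square machinery of \cite{DBLP:journals/tcs/StoyeG02}, which is stated for ordinary suffix trees over equality of characters, transfers verbatim to $\opSufTree(w)$. The key point is that $\opSufTree(w)$ \emph{is} literally the compacted trie of the strings $\code(\suf_i)\#$, so it is an ordinary suffix tree over the (polynomially bounded) alphabet of codes, for the quasi-suffix collection $\SufCodes(w)$ (Lemma~\ref{lem:qs_collection}); hence the Gusfield--Stoye algorithm applies as a black box to detect, purely combinatorially, all pairs $(i,k)$ such that $\code(\suf_i)$ and $\code(\suf_{i+k})$ agree on their first $k$ characters while the $(k{+}1)$-st characters differ, and by Lemma~\ref{lem:code} and the Claim (the common-prefix property) this is exactly the condition $w[i\dotdot i+k-1]\approx w[i+k\dotdot i+2k-1]$ together with $w[i\dotdot i+k]\not\approx w[i+k\dotdot i+2k]$. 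Thus no character-oracle queries are needed — only the tree skeleton, node depths, and the leaf-list information already stored — and the correctness of $T$ follows. I would close by noting that the $O(n\log n)$ bound is inherited directly from \cite{DBLP:journals/tcs/StoyeG02} and the $O(n)$ size of $\opSufTree(w)$, and the $O(1)$ query is immediate from the array lookup.
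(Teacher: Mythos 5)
Your proposal is correct and follows essentially the same route as the paper: run the Gusfield--Stoye branching-square detection on the skeleton of $\opSufTree(w)$ (where branching squares over the code alphabet are exactly the non-extendible op-squares, by Lemma~\ref{lem:code} and the online property of the code), invoke the preceding lemma to guarantee one non-extendible representative per achievable length, and record the answers in a boolean table for $O(1)$ lookup. The paper leaves these details implicit, but your filled-in argument — including the observation that no character-oracle queries are needed — matches its intended proof.
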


  The algorithm of Gusfield and Stoye can also compute all the occurrences
  of \emph{squares} in a string in additional time proportional to the number of reported
  occurrences.
  For this, it starts at every branching square $w[i \dotdot i+2k-1]$ and shifts
  it to the left position-by-position as long as it forms a square, i.e.
  as long as $w[i-j]=w[i+k-j]$, $j=1,2,\ldots$

  A generalization of this algorithm to op-squares requires efficient testing
  if an op-square can be shifted to the left.
  This could be done using the character oracle for the reversed text, however, there is
  a more efficient solution.

  \begin{theorem}
    All order-preserving squares in a string $w$ of length $n$ can be computed
    in $O(n\log{n} + \Occ)$ time, where $\Occ$ is the total number of occurrences
    of op-squares.
  \end{theorem}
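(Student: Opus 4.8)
The plan is to adapt the Gusfield–Stoye square-enumeration algorithm to the order-preserving setting. Recall how it works for ordinary squares: we first find all branching squares $w[i\dotdot i+2k-1]$ using the suffix tree (there are $O(n\log n)$ of them, by the standard argument based on the ``smaller half'' trick applied to leaf lists of branching nodes), and then from each branching square we shift left position-by-position, outputting $w[i-j\dotdot i-j+2k-1]$ as long as $w[i-j]=w[i+k-j]$. Every square occurs as such a left-shift of a unique branching square, and the total shifting work is proportional to the number of reported occurrences plus the number of branching squares, giving $O(n\log n+\Occ)$. I would mirror this structure: run the order-preserving-suffix-tree version of the Gusfield–Stoye algorithm to find all branching squares, where now ``branching square'' means $w[i\dotdot i+k-1]\approx w[i+k\dotdot i+2k-1]$ detected at the explicit node for $\code(\suf_i)[1\dotdot k]$, and the condition $w[i+2k]\ne w[i]$ is replaced by the op-analogue that the two length-$(k+1)$ halves are not order-isomorphic. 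By the lemma already proved, every op-square length that occurs occurs as a non-extendible op-square, and more generally every op-square occurrence arises as a left-shift of exactly one branching op-square, so correctness of the enumeration reduces to correctly and efficiently testing the left-shift condition.

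The key combinatorial step, and the main obstacle, is characterizing when an op-square $w[i\dotdot i+2k-1]$ can be shifted one position left, i.e. when $w[i-1\dotdot i+k-2]\approx w[i+k-1\dotdot i+2k-2]$, given that $w[i\dotdot i+k-1]\approx w[i+k\dotdot i+2k-1]$ already holds. Since order-isomorphism of the two halves is already known, prepending $w[i-1]$ to the first half and $w[i+k-1]$ to the second half preserves isomorphism if and only if the new element $w[i-1]$ sits in the same relative position among $w[i\dotdot i+k-2]$ as $w[i+k-1]$ sits among $w[i+k\dotdot i+2k-2]$. Concretely, by Lemma~\ref{lem:code} it suffices to check that $\phi(w[i-1\dotdot i+k-2],1)=\phi(w[i+k-1\dotdot i+2k-2],1)$, i.e. that the rank of $w[i-1]$ relative to the window $w[i\dotdot i+k-2]$ equals the rank of $w[i+k-1]$ relative to $w[i+k\dotdot i+2k-2]$ — both the ``strictly smaller'' count and the ``equal'' count must agree. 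The paper observes this ``could be done using the character oracle for the reversed text''; indeed $\prevlt$ and $\preveq$ for the reversed string are exactly counts of points in orthogonal rectangles, so an $O(\log n/\log\log n)$-per-query oracle exists. But that would make the shifting cost $O((\Occ+n\log n)\cdot\log n/\log\log n)$, too slow.

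To reach $O(n\log n+\Occ)$ I would give a direct solution exploiting that the shifts are performed in a predictable, sequential order. Fix a branching op-square with left halves starting at $i$ and $i+k$; as $j$ increases, both windows slide left in lockstep, and the two elements being compared, $w[i-j]$ and $w[i+k-j]$, are consecutive elements of $w$. Maintain, for the two parallel length-$k$ windows, the quantities needed to decide the next isomorphism test incrementally: when the window slides one step left, one element enters on the left and one leaves on the right, so the relative rank of the newly entering element can be obtained from a dynamic order-statistics / predecessor structure over the current window contents, updated in $O(\log n)$ time per step — and the crucial point is that the number of slide steps is charged either to an emitted occurrence or to the originating branching square. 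An alternative, possibly cleaner, route is to precompute for every position $p$ and every ``target'' relative configuration the maximal run of consecutive positions over which the two windows stay isomorphic, using the suffix tree structure itself (longest common extension-type queries in the code space, which the suffix tree with the character oracle already supports in $O(\log n/\log\log n)$ time), so that each branching square contributes all its left-shifted occurrences by a single LCE-style query plus $O(1)$ work per reported occurrence. Either way the total is $O(n\sqrt{\log n})$ for the oracle, $O(n\log n)$ for branching squares, and $O(\Occ)$ plus an $O(\log n)$-per-branching-square overhead for the shifting, which is absorbed into the $O(n\log n)$ term since there are $O(n\log n)$ branching squares — wait, that last bound needs care: the number of \emph{distinct} branching-square node/length pairs is $O(n\log n)$, but we must ensure the per-pair overhead is $O(1)$ amortized or $O(\log n)$ at worst with a matching global bound; using the LCE approach the per-pair overhead is a single $O(\log n/\log\log n)$ query, yielding $O(\frac{n\log^2 n}{\log\log n})$ which is \emph{not} within the claimed bound, so the incremental order-statistics approach with $O(1)$ amortized work per slide is the one I would pursue, and the genuine difficulty is proving that amortized $O(1)$ bound rather than $O(\log n)$. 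This is where I expect to spend the most effort; the rest follows the Gusfield–Stoye template essentially verbatim once the left-shift test is in place.
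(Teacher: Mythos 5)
There is a genuine gap, and it sits exactly where you say you expect to spend the most effort. The missing idea is the paper's key observation (its Claim~\ref{claim:LCA}): $w[i\dotdot i+2k-1]$ is an op-square if and only if the LCA of the leaves of $\opSufTree(w)$ corresponding to $\suf_i$ and $\suf_{i+k}$ has string depth at least $k$. This is immediate from Lemma~\ref{lem:code} together with the online property of the code (Observation~\ref{obs:prefix_properties}): the halves are order-isomorphic iff $\code(\suf_i)$ and $\code(\suf_{i+k})$ agree on their first $k$ symbols, and the length of the longest common prefix of two strings stored in the compacted trie is precisely the depth of the LCA of their leaves. After $O(n)$ Harel--Tarjan preprocessing this is an $O(1)$-time query, with no character oracle involved. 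You actually brush against this when you mention ``longest common extension-type queries in the code space,'' but you miscost them at $O(\log n/\log\log n)$ per query by assuming they must go through the character oracle; the oracle is only needed to compare the text against an \emph{external} pattern, whereas an LCE between two suffixes of $w$ itself is read off the tree for free. Having dismissed the correct route on the basis of this wrong cost estimate, you retreat to a dynamic order-statistics structure over sliding windows and concede that you would still need to prove an amortized $O(1)$ bound per slide step, which you do not do (and which is not obviously true for a comparison-based rank structure). So the proposal as written does not establish the claimed $O(n\log n+\Occ)$ bound.

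With the LCA observation in hand the rest collapses to the Gusfield--Stoye template you describe: find the $O(n\log n)$ non-extendible (branching) op-squares via the suffix-tree traversal, then shift each one leftward position by position, testing each candidate $w[i-j\dotdot i-j+2k-1]$ with a single $O(1)$ LCA query, and stopping either when the test fails or when another non-extendible op-square of the same length is reached (the latter is needed to avoid double reporting, since a non-extendible op-square can still be shiftable). Every query except the last at each starting square is charged to a reported occurrence, giving $O(n\log n+\Occ)$ overall. One further small correction: your per-character left-shift criterion comparing $\phi(w[i-1\dotdot i+k-2],1)$ with $\phi(w[i+k-1\dotdot i+2k-2],1)$ is not by itself sufficient, because $\phi(\cdot,1)=(0,0)$ always; what you need is the rank of the entering element within the whole new window, i.e.\ the reversed-text analogue of $\phi$, which is exactly why the paper mentions (and then avoids) an oracle for the reversed text.
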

  \begin{proof}
    We use the following fact.
    \begin{myclaim}\label{claim:LCA}
      The string $w[i \dotdot i+2k-1]$ is an op-square if and only if
      the lowest common ancestor (LCA) node of the leaves
      of $\opSufTree(w)$ corresponding to $\suf_i$ and $\suf_{i+k}$
      has depth at least $k$.
    \end{myclaim}

    After $O(n)$ preprocessing time, LCA of nodes in a tree can be computed
    in $O(1)$ time \cite{DBLP:journals/siamcomp/HarelT84}.
    By the claim we can keep shifting the non-extendible op-square
    to the left.
    We stop either when the tested substring is not an op-square or when
    we encounter another non-extendible op-square, the latter situation is possible since
    non-extendible op-squares can still be shiftable.
    We obtain an algorithm with required complexity.
  \qed
  \end{proof}

  \bibliographystyle{abbrv}
  \bibliography{op_suftree}

\end{document}